\documentclass[a4paper,UKenglish,cleveref,autoref,thm-restate]{lipics-v2021}

\pdfoutput=1 
\hideLIPIcs 
\nolinenumbers 

\usepackage{amssymb}
\usepackage{hyperref}
\usepackage{graphicx}
\usepackage{booktabs}
\usepackage{forest}

\newcommand{\NN}{\mathbb{N}}

\newcommand{\torel}{\mathrel{\to^=}}
\newcommand{\eps}{\epsilon}
\newcommand{\cc}{\textsf{cc}}
\newcommand{\id}{\textsf{id}}

\newcommand{\vals}{\textsf{vals}}

\newcommand{\vdashc}{\mathrel{\vcenter{\hbox{$\vdash$}}}}

\newcommand{\vDashc}{\mathrel{\vcenter{\hbox{$\vDash$}}}}
\newcommand{\nvDashc}{\mathrel{\vcenter{\hbox{$\nvDash$}}}}

\title{Semantic Labelling in Practice}

\author{Dieter Hofbauer}
{ASW Saarland}{message@dieter-hofbauer.de}{https://orcid.org/0000-0003-2094-6074}{}

\author{Johannes Waldmann}
{HTWK Leipzig}{johannes.waldmann@htwk-leipzig.de}{}{}

\authorrunning{D.~Hofbauer and J.~Waldmann}

\ccsdesc{Theory of computation → Equational logic and rewriting;
  Theory of computation → Rewrite systems}

\keywords{termination, string rewriting, term rewriting}

\begin{document}
\maketitle

\begin{abstract}
  Automating semantic labelling for termination proofs is
  a combinatorially hard problem since the number of algebras
  grows prohibitively large even for small domains.   
  We report on experiments with our tools Matchbox and MnM,
  comparing various model-finding strategies: 
  exhaustive enumeration for bounded domain sizes
  within restricted search spaces, and
  semantic context-closure for fixed algebras.
\end{abstract}

\section{Introduction}

Semantic labelling is a transformation technique for rewrite systems,
introduced by
Zantema~\cite{DBLP:journals/fuin/Zantema95,DBLP:journals/jar/Zantema05}. 
Given a rewrite system, the goal is to find a labelled version for which
termination is easier to prove than for the original system.
For the transformation to be correct, it must preserve both termination
and nontermination.
Since the labelled system typically has a larger alphabet, 
interpretations and other termination methods gain more freedom
as they can assign different meanings to the same original symbol,
now distinguished by labels. 
Each semantic labelling is parameterized by an algebra---hence the name---and
is correct if that algebra is a model of the rewrite system.
Algebras with larger domains yield larger alphabets
for the transformed system---as desired---but automating
semantic labelling becomes increasingly difficult, 
due to the combinatorially large number of algebras. 

In this paper, we compare different model-finding strategies: 
exhaustive enumeration for bounded domain sizes within restricted
search spaces, 
and semantic context-closure for fixed algebras. 
We report on experiments with our tools
Matchbox~\cite{DBLP:conf/rta/Waldmann04}%
\footnote{Available at \url{https://git.imn.htwk-leipzig.de/waldmann/pure-matchbox}.}
and MultumNonMulta (MnM)~\cite{Hofbauer2016}%
\footnote{Available at \url{https://hub.docker.com/repositories/dieterhofbauer/multumnonmulta}.}.

Definitions and examples in this paper refer to string rewriting, but all
approaches discussed can be directly applied to term rewriting as well.
All examples in the directories \texttt{SRS\_Standard} and
\texttt{SRS\_Relative} are taken from the Termination Problems
Database\footnote{TPDB, The Termination Problems Database,
  Version~11.5~\url{https://github.com/TermCOMP/TPDB-ARI}.}, 
and we occasionally refer to results of the annual
Termination Competition (TC)\footnote{%
  For a survey
  see~\url{https://termination-portal.org/wiki/Termination_Competition_History}, 
  and~\url{https://termcomp.github.io/} for the result data.
},
where tools are evaluated on benchmark problems.

\medskip
Following the seminal work of Zantema~\cite{DBLP:journals/fuin/Zantema95}, 
numerous extensions and applications of semantic labelling
have been proposed. Among these, we mention 
self-labelling~\cite{DBLP:conf/cade/MiddeldorpOZ96}; 
labelling for rewriting modulo equations~\cite{DBLP:conf/csl/OhsakiMG00}; 
predictive
labelling~\cite{DBLP:conf/rta/HirokawaM06,DBLP:conf/cade/KoprowskiM07},
restricting the model property to usable rules,
and predictive labelling for innermost
termination~\cite{DBLP:journals/entcs/ThiemannM08};
root-labelling~\cite{DBLP:conf/rta/SternagelM08}; 
modularity results and certification for labelling and
unlabelling~\cite{DBLP:conf/rta/SternagelT11};
SAT-encoding of constraints for labelling, among other proof
methods~\cite{BauEndrullisWaldmann2013,BauThiemannWaldmann2014}. 
Various termination tools include implementations of semantic
labelling, including
Torpa~\cite{DBLP:journals/jar/Zantema05}, 
TPA~\cite{DBLP:conf/rta/Koprowski06a},
Teparla,
Jambox,
AProVE~\cite{DBLP:journals/jar/GieslABEFFHOPSS17},
and TcT~\cite{DBLP:conf/tacas/AvanziniMS16}. 

\medskip
We mostly use standard notations from rewriting, formal languages and algebra,
as in~\cite{BookOtto93, Terese03, DBLP:series/eatcs/Wechler92}. 
For a a set $R$ of strict rules (denoted by $\to$) and
a set $S$ of nonstrict rules (denoted by $\torel$),
we say that $R \cup S$ is \emph{terminating}
if $R$ is terminating relative to $S$.
Let $[0, n) = \{0, \dots, n-1\}$ for $n \in \NN$.

\section{Semantic Labelling}\label{sem-lab}

Semantic labelling is parameterized by a fixed algebra
whose domain serves as the set of labels. 
Since labels propagate right-to-left in strings (and bottom-up in terms),
the labelling is not a homomorphic interpretation.
This right-context sensitivity,
together with the increasing alphabet, 
constitutes the main strength of the transformation. 

For an alphabet $\Sigma$, a \emph{$\Sigma$-algebra} $A$
consists of a nonempty domain $D$
and an interpretation function
$A : \Sigma \to D^D$
that is homomorphically extended to $A : \Sigma^* \to D^D$
by $A(\eps) = \id_D$ for the empty string $\eps$
and the identity $\id_D$ on $D$, and 
$A(x \cdot y) = A(x) \circ A(y)$ for $x, y \in \Sigma^*$.%
\footnote{We use $A$ to denote both the algebra 
  and the interpretation function; this should cause no confusion.}
Thus, the interpretation of strings in a given algebra
is a homomorphic interpretation 
into the function space $D^D$, i.e.,
a homomorphism from $(\Sigma^*, {\cdot})$ into $(D^D, {\circ})$,
where $\cdot$ is concatentation of strings,
and $\circ$ is function composition.
Note that, when letters are viewed as unary symbols of a term signature,
this definition coincides with those based on term evaluation.
In this paper all alphabets and domains are finite; 
without loss of generality,
let $D = [0, n)$ for some natural number $n > 0$.
We further identify $A : \Sigma^* \to D^D$
with its uncurried form $A:  \Sigma^* \times D \to D$
and write $A(x, d)$ for $A(x)(d)$ accordingly.

\begin{remark}\label{automaton}
Such an algebra can be viewed as a
complete deterministic automaton with states $D$
and transitions $d \to^a d'$ if $A(a, d) = d'$.
\end{remark}

\begin{definition}
  An algebra is a \emph{model} of a rewrite system
  $R \subseteq \Sigma^* \times \Sigma^*$,
  denoted by $A \vDashc R$, 
  if $A(\ell) = A(r)$ for every rule $\ell \to r$ in $R$.
\end{definition}

\begin{lemma}\label{equiv}
  For $x, y \in \Sigma^*$, if $A \vDashc R$
  and $x \leftrightarrow^*_R y$, then $A(x) = A(y)$.
\end{lemma}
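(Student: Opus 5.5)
The plan is to reduce the statement to a single rewrite step and then close under reflexivity, symmetry and transitivity. Since $\leftrightarrow^*_R$ is the reflexive–transitive closure of $\to_R \cup \leftarrow_R$, it suffices to show that $x \to_R y$ implies $A(x) = A(y)$. The general case then follows by induction on the length of a conversion $x = z_0 \leftrightarrow_R z_1 \leftrightarrow_R \dots \leftrightarrow_R z_k = y$. The base case $k=0$ is trivial. In the step, each $z_{i} \leftrightarrow_R z_{i+1}$ is either a forward or a backward step, and equality of interpretations is symmetric and transitive.

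For a single step $x \to_R y$, we have by definition of the string rewrite relation $x = u \ell v$ and $y = u r v$ for some rule $\ell \to r$ in $R$ and $u, v \in \Sigma^*$. Because $A$ is a homomorphism from $(\Sigma^*, \cdot)$ into $(D^D, \circ)$, we get
\[
A(x) = A(u) \circ A(\ell) \circ A(v) = A(u) \circ A(r) \circ A(v) = A(y),
\]
where the middle equality is exactly the model condition $A(\ell) = A(r)$ from $A \vDashc R$. This is the whole content: equality in $D^D$ is a congruence for $\circ$, so the model property on rules lifts to all contexts.

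There is no real obstacle here. The only point needing care is that the homomorphic extension is well defined on concatenations of three strings, i.e.\ $A(u\ell v) = A(u)\circ A(\ell)\circ A(v)$. This follows from $A(x\cdot y) = A(x)\circ A(y)$ together with associativity of both $\cdot$ and $\circ$. Equivalently, in the uncurried form, $A(u\ell v, d) = A(u, A(\ell, A(v,d)))$ for every $d \in D$, which is the statement I would verify pointwise if any doubt arose.
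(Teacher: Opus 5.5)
Your proof is correct and follows the intended argument. The paper states this lemma without proof, but your single-step computation $A(u\ell v) = A(u)\circ A(\ell)\circ A(v) = A(u)\circ A(r)\circ A(v) = A(urv)$ is exactly the one the paper uses in Remark~\ref{preserve-model}, and your induction on the length of the conversion supplies the routine closure under reflexivity, symmetry and transitivity.
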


\begin{definition}
\emph{Semantic labelling} for algebra $A$ is the mapping
$\lambda_A : \Sigma^* \times D \to (\Sigma \times D)^*$, 
defined recursively
for $a \in \Sigma$, $x \in \Sigma^*$, $d \in D$ by
$\lambda_A(\eps, d) = \eps$ and
\[ \lambda_A(a \cdot x, d) = (a, A(x, d)) \cdot \lambda_A(x, d) . \]
\end{definition}

As an immediate consequence, 
$\lambda_A(x \cdot y, d) = \lambda_A(x, A(y, d)) \cdot \lambda_A(y, d)$
for $x,y \in \Sigma^*$. 
We say that $a$ is \emph{labelled} by $d$ in $(a, d)$.
For readability, we abbreviate $(a, d)$ as $a_d$ in examples.
\begin{remark}\label{transducer}
  The semantic labelling function $\lambda x . \lambda_A(x, d)$
  can be defined by a transducer with initial state $d$. 
\end{remark}

Define
$\lambda_A(R) = \{ \lambda_A(\ell, d) \to \lambda_A(r, d) \mid
(\ell \to r) \in R,\, d \in D\}$.

\begin{lemma}\label{label}
  For $x, y \in \Sigma^*$ and $d \in D$,
  $x \to_R y$ implies
  $\lambda(x, d) \to_{\lambda_A(R)} \lambda(y, d)$.
\end{lemma}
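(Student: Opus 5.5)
The proof will be a direct unfolding of one rewrite step, using the concatenation identity $\lambda_A(x \cdot y, d) = \lambda_A(x, A(y, d)) \cdot \lambda_A(y, d)$ stated right after the definition of semantic labelling. Throughout I will assume, as the surrounding context intends, that $A \vDashc R$. The statement does not say this explicitly, but it is indispensable: without it the labels to the left of the redex can change.

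First, since $x \to_R y$, I write $x = u \cdot \ell \cdot v$ and $y = u \cdot r \cdot v$ for some rule $(\ell \to r) \in R$ and strings $u, v \in \Sigma^*$. Let $e = A(v, d)$. Applying the concatenation identity twice, I get
\[ \lambda_A(u \ell v, d) = \lambda_A(u, A(\ell v, d)) \cdot \lambda_A(\ell, e) \cdot \lambda_A(v, d), \]
and likewise
\[ \lambda_A(u r v, d) = \lambda_A(u, A(r v, d)) \cdot \lambda_A(r, e) \cdot \lambda_A(v, d). \]
Here I use that $A(\ell v, d) = A(\ell, A(v,d))$, which holds because $A$ is a homomorphism into $(D^D, \circ)$.

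The key step is to show that the prefixes coincide. By the model property, $A(\ell) = A(r)$, so $A(\ell v, d) = A(\ell)(e) = A(r)(e) = A(r v, d)$. Hence both strings have the common prefix $\lambda_A(u, A(\ell v, d))$. They also share the common suffix $\lambda_A(v, d)$, which does not depend on the rule at all.

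Finally, the middle parts $\lambda_A(\ell, e)$ and $\lambda_A(r, e)$ form the rule $\lambda_A(\ell, e) \to \lambda_A(r, e)$, which belongs to $\lambda_A(R)$ by definition, since $e \in D$. Placing this rule in the context given by the common prefix and suffix yields one $\to_{\lambda_A(R)}$ step from $\lambda_A(x, d)$ to $\lambda_A(y, d)$. The only nontrivial point is the prefix equality, which is exactly where the model hypothesis enters. The rest is bookkeeping with the concatenation identity.
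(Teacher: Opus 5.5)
Your proof is correct. The paper states this lemma without proof and treats it as immediate from the identity $\lambda_A(x \cdot y, d) = \lambda_A(x, A(y, d)) \cdot \lambda_A(y, d)$ and the standing assumption $A \vDashc R$; your decomposition $x = u\ell v$, $y = urv$ is exactly that standard argument, and you are right that the model hypothesis, though omitted from the statement, is indispensable because it makes the labels on the prefix $u$ agree.
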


By \emph{unlabelling}~\cite[Sect.~7.3]{DBLP:journals/jar/Zantema05}
(cf.~\cite{DBLP:conf/rta/SternagelT11}), 
i.e., by the homomorphism $u : (a, d) \mapsto a$,
we arrive at the reverse implication.\footnote{%
  Observe that for an arbitrary homomorphism $h : \Gamma^* \to \Sigma^*$
  and a rewrite system $S$ over alphabet $\Gamma$, 
  termination of $h(S)$ over alphabet $\Sigma$ implies termination of $S$, 
  since $x \to_S y$ implies $h(x) \to_{h(S)} h(y)$.
}
Note that $u(\lambda_A(R)) = R$.

\begin{lemma}\label{unlabel}
  For $x, y \in (\Sigma \times D)^*$, 
  $x \to_{\lambda_A(R)} y$ implies  $u(x) \to_R u(y)$. 
\end{lemma}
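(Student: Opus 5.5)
The plan is to use the footnote's observation: $u$ is a string homomorphism, it commutes with concatenation, and it maps each rule of $\lambda_A(R)$ onto a rule of $R$.

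First, unfold the rewrite step. By definition of $\to_{\lambda_A(R)}$, the step $x \to_{\lambda_A(R)} y$ means there are $p, q \in (\Sigma \times D)^*$, a rule $(\ell \to r) \in R$ and a label $d \in D$ such that
\[ x = p \cdot \lambda_A(\ell, d) \cdot q \quad\text{and}\quad y = p \cdot \lambda_A(r, d) \cdot q . \]
Since $u$ is a homomorphism, this gives
\[ u(x) = u(p)\, u(\lambda_A(\ell, d))\, u(q) \quad\text{and}\quad u(y) = u(p)\, u(\lambda_A(r, d))\, u(q) . \]

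The only step with content is the auxiliary claim $u(\lambda_A(z, d)) = z$ for all $z \in \Sigma^*$ and $d \in D$. I would prove it by induction on the length of $z$, using the recursive definition of $\lambda_A$. For $z = \eps$ we have $\lambda_A(\eps, d) = \eps$, whose image under $u$ is $\eps$. For $z = a \cdot z'$ we compute
\[ u(\lambda_A(a \cdot z', d)) = u\bigl((a, A(z', d))\bigr) \cdot u(\lambda_A(z', d)) = a \cdot z' , \]
by the definition of $u$ and the induction hypothesis. This is exactly the paper's remark that $u(\lambda_A(R)) = R$, made precise at the level of individual rules.

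With the claim in hand, $u(x) = u(p)\,\ell\,u(q)$ and $u(y) = u(p)\,r\,u(q)$. This is a single $R$-step with context $u(p)$ and $u(q)$, so $u(x) \to_R u(y)$. I do not expect a real obstacle here. Note that the model property $A \vDashc R$ is not needed for this direction; it is only used in Lemma~\ref{label}.
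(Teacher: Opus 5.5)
Your proof is correct and matches the paper's justification: the footnote's observation that a homomorphism $h$ maps $\to_S$-steps to $\to_{h(S)}$-steps, combined with $u(\lambda_A(R)) = R$. Your induction showing $u(\lambda_A(z,d)) = z$ simply makes that last identity precise, and you are right that the model property plays no role in this direction.
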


The following result is an immediate consequence of Lemmas~\ref{label}
and~\ref{unlabel}.

\begin{theorem}[{\cite[Thm.~4]{DBLP:journals/fuin/Zantema95}}]\label{main-thm}
  $R$ is terminating if and only if $\lambda_A(R)$ is terminating. 
\end{theorem}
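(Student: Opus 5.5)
We prove both directions by contraposition, turning an infinite derivation of one system into an infinite derivation of the other. The two translations are Lemma~\ref{label} (labelling) and Lemma~\ref{unlabel} (unlabelling). Throughout, $A$ is assumed to be a model of $R$, $A \vDashc R$, as the theorem's setting requires. The statement is meant in the relative setting of the paper: strict and nonstrict rules are labelled the same way, and the argument below also tracks which rule applications are strict.

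For the direction ``$\lambda_A(R)$ terminating implies $R$ terminating'', suppose $R$ is not terminating. Then there is an infinite derivation $x_0 \to_R x_1 \to_R x_2 \to_R \cdots$. Fix an arbitrary $d \in D$; since $D$ is nonempty, such a $d$ exists. Apply Lemma~\ref{label} to each step to get $\lambda_A(x_i, d) \to_{\lambda_A(R)} \lambda_A(x_{i+1}, d)$ for every $i$. This is an infinite $\lambda_A(R)$-derivation, a contradiction.

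For the relative version we must check two things. First, a step using a strict rule $\ell \to r$ becomes a step using the strict labelled rule $\lambda_A(\ell, d') \to \lambda_A(r, d')$. Second, infinitely many strict steps stay infinitely many strict steps. Both follow from the proof of Lemma~\ref{label}, which I would recall briefly. Write $x = p\ell q$ and $y = prq$, and put $d' = A(q, d)$. Then
$\lambda_A(x,d) = \lambda_A(p, A(\ell, d'))\,\lambda_A(\ell, d')\,\lambda_A(q,d)$.
Since $A(\ell) = A(r)$, the prefix $\lambda_A(p, A(\ell,d'))$ equals $\lambda_A(p, A(r,d'))$, so the left and right contexts coincide in $x$ and $y$. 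The step therefore uses exactly the labelled copy of the same rule.

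For the converse, suppose $\lambda_A(R)$ admits an infinite derivation $y_0 \to y_1 \to \cdots$. Lemma~\ref{unlabel} maps each step to a step $u(y_i) \to_R u(y_{i+1})$, giving an infinite $R$-derivation. A strict labelled rule is the image under $\lambda_A$ of a strict rule of $R$, and $u(\lambda_A(\ell,d)) = \ell$, so strict steps map to strict steps. This direction does not use the model property at all.

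The main obstacle is Lemma~\ref{label}, specifically the point that the label of the prefix $p$ is unchanged by the rewrite step. This is the only place where the model property $A(\ell) = A(r)$ is used. Everything else is routine bookkeeping with the decomposition identity $\lambda_A(x\cdot y, d) = \lambda_A(x, A(y,d))\cdot\lambda_A(y,d)$.
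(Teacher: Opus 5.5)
Your proof is correct and takes the same route as the paper, which states the theorem as an immediate consequence of Lemma~\ref{label} (label an infinite $R$-derivation from a fixed $d \in D$) and Lemma~\ref{unlabel} (unlabel an infinite $\lambda_A(R)$-derivation via $u$). Your extra bookkeeping for strict versus nonstrict steps in the relative setting is sound, and it usefully makes explicit that the model hypothesis $A \vDashc R$ is needed only for the labelling direction; the paper leaves both points implicit.
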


\begin{example}[{\cite[Example~1]{DBLP:journals/fuin/Zantema95}}]\label{aa-aba}
  The algebra $A$ with domain $D = \{0,1\}$, 
  $A(a, d) = 1$ and $A(b, d) = 0$ for $d \in D$
  (i.e., $A = \{ a \mapsto \{0 \mapsto 1, 1 \mapsto 1\},\,
                 b \mapsto \{0 \mapsto 0, 1 \mapsto 0\} \}$)
  is a model of $R = \{a a \to a b a\}$ over alphabet $\{a, b\}$, as
  $A(a a, d) = 1 = A(a b a, d)$ for every $d \in D$. 
  Termination of
  $\lambda_A(R) = \{ a_1 a_0 \to a_0 b_1 a_0 ,\, a_1 a_1 \to a_0 b_1 a_1 \}$
  is easily shown by the weight function\footnote{A weight
    function $w$ is a homomorphic interpretation from
    $(\Sigma^*, {\cdot})$ into $(\NN, {+})$.
    It defines an ordering on $\Sigma^*$ by $x >_w y$ if $w(x) > w(y)$.
    For a rewrite system $R$,
    $\ell >_w r$ for all strict rules and
    $\ell \geq_w r$ for all nonstrict rules $\ell \to r$ in $R$
    proves termination (and linear derivational complexity)
    of $R$.\label{weights}
  }
  $w$ with $w(a_1) = 1$ and $w(c) = 0$ for $c \neq a_1$, 
  whereas termination of $R$ is (slightly) more difficult to prove
  since $R$ is not simply terminating.
\end{example}

\begin{example}
  \texttt{SRS\_Standard/Waldmann\_19/random-373}: 
  Termination of the rewrite system
  $\{b b a a \to a a a b, a a a b \to b a b a,\, a b a b \to a a b b \}$
  was solved by MnM in neither TC'24 nor TC'25
  (but during TC'25 by Matchbox in 29 s and by MUTERM in 1.5 s). 
  After labelling with the model 
  $a \mapsto \{0 \mapsto 1, 1 \mapsto 0\},\,
  b \mapsto \{0 \mapsto 1, 1 \mapsto 0\}$,
  a proof is found by MnM in 0.4 s by 
  a weight ordering and a subsequent 
  matrix interpretation of dimension 5.
\end{example}

\begin{example}\label{saturation-counting} 
  \texttt{SRS\_Relative/Mixed\_relative\_SRS/un03}:
  In the Termination Competition, 
  $R = \{ b a b a b \to b a a b a a a b ,\,
  b a a b a a b \to (b a a a)^3 b ,\,
  (b a a a)^4 b \to b b a a b ,\,
  b b b \torel b a b a a b ,\,\allowbreak 
  b a b a a b \torel b b b \}$
  was shown to be terminating only by
  Matchbox and MnM (TC'25: Matchbox 47 s, MnM 23 s).
  The algebra $A$ with domain $D = [0, 4)$, 
  $A(a, d) = \min\{ d + 1, 3 \}$, and $A(b, d) = 0$
  is a model of $R$, since the first letter
  of each left-hand side is $b$.
  The labelled system has 12 strict and 8 nonstrict rules
  over an alphabet of size 7 (the letter $a_3$ does not appear).
  All strict rules of the labelled system can be removed by some
  weight order, see Footnote~\ref{weights}.
  This algebra uses
  \emph{saturating increment} with maximum $3$
  as interpretation of $a$,
  and the constant \emph{reset} function $d \mapsto 0$ 
  as interpretation of $b$.
  We discuss this family of models further in Section~\ref{semantic-cc}.
\end{example}

Rewrite systems $R$ and $S$ over the same alphabet $\Sigma$
are called \emph{isomorphic} if there exists a
bijection $f : \Sigma \to \Sigma$ such that $f(R) = S$.\footnote{
  In formal language theory such a bijection is known
  as a letter-to-letter renaming.
}
In particular, any (non)termination proof for a rewrite system
carries over to every isomorphic system.

\begin{proposition}
  For isomorphic algebras $A$ and $B$, 
  $\lambda_A(R)$ and $\lambda_B(R)$ are isomorphic.
\end{proposition}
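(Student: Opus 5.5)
The plan is to turn the isomorphism of algebras into a letter-to-letter renaming of the labelled alphabet $\Sigma \times D$. I take \emph{isomorphic} algebras $A$ and $B$ over $\Sigma$ with domain $D$ to mean that there is a bijection $\varphi : D \to D$ with $\varphi(A(a, d)) = B(a, \varphi(d))$ for all $a \in \Sigma$ and $d \in D$; the paper has not yet fixed this definition, so I adopt it. As candidate renaming I define $f : \Sigma \times D \to \Sigma \times D$ by $f(a, d) = (a, \varphi(d))$. This is a bijection because $\varphi$ is one, and I extend it homomorphically to strings.

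First, by induction on $x \in \Sigma^*$, I show that $\varphi(A(x, d)) = B(x, \varphi(d))$ for all strings $x$. The case $x = \eps$ is trivial, since both interpretations are the identity. For $x = a \cdot x'$, the homomorphic extension gives $A(a \cdot x', d) = A(a, A(x', d))$. Applying $\varphi$ and using first the letter case and then the induction hypothesis yields $B(a, B(x', \varphi(d))) = B(a \cdot x', \varphi(d))$.

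Second, again by induction on $x$, I show the key identity $f(\lambda_A(x, d)) = \lambda_B(x, \varphi(d))$. The base case is $\eps = \eps$. In the step, the definition of labelling gives
\[ f(\lambda_A(a \cdot x, d)) = (a, \varphi(A(x, d))) \cdot f(\lambda_A(x, d)) . \]
By the first step and the induction hypothesis, the right-hand side equals $(a, B(x, \varphi(d))) \cdot \lambda_B(x, \varphi(d))$, which is $\lambda_B(a \cdot x, \varphi(d))$.

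Finally, applying the key identity rule-wise gives
\[ f(\lambda_A(R)) = \{ \lambda_B(\ell, \varphi(d)) \to \lambda_B(r, \varphi(d)) \mid (\ell \to r) \in R,\, d \in D \} . \]
Since $\varphi$ is surjective on $D$, we may reindex with $e = \varphi(d)$, and this set is exactly $\lambda_B(R)$. Strictness is preserved because each rule keeps the kind of its origin in $R$, so the argument treats strict and nonstrict rules alike. The only point needing care is the composition-order bookkeeping in the first step, where labels are propagated right-to-left, together with stating the notion of algebra isomorphism precisely. I expect no genuine obstacle.
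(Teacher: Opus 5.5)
Your proof is correct. The paper states this proposition without proof, and your argument is the evident one it leaves to the reader: the domain bijection $\varphi$ induces the renaming $(a,d)\mapsto(a,\varphi(d))$ of the labelled alphabet $\Sigma\times D$, justified by your two inductions and the reindexing over $D$. Your reading of algebra isomorphism as a bijection $\varphi$ on $D$ with $\varphi(A(a,d)) = B(a,\varphi(d))$ also matches how the paper uses the notion in Example~\ref{expl-RG}.
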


\begin{remark}\label{useless}
  Isomorphic algebras are not the only source of redundancy.
  If the domain is a singleton, the labelled system is isomorphic
  to the original, since all letters carry the same label.  
  More generally for arbitrary domains,
  if $A$ has a fixed point $d \in D$ in the second argument,
  i.e., $A(a, d) = d$ for all $a \in \Sigma$, 
  then the labelled system contains a subset that is isomorphic
  to the original.
  This motivates further restricting the search space to
  strongly connected algebras, see Section~\ref{SC-algebras}. 
\end{remark}

\begin{example}\label{a-eps}
  Any system containing the rule $a \to \eps$
  has only strongly restricted models, 
  as every model must map letter $a$ to $\id_D$. 
  This motivates the use of \emph{context-closures},
  see Example~\ref{a-eps-contd}. 
\end{example}

\section{Restricting the Search Space}

Finding a model for a given rewrite system by enumerating
algebras is feasible only if the search space is sufficiently small,
and
if the subsequent proof attempt for each algebra is sufficiently fast.
In this section, we describe two approaches for reducing the search space:
the use of restricted growth algebras as an approximation of a
full isomorphism check,
and the use of strongly connected algebras to eliminate
a different source of redundancy.
In experiments, we consider only weight orderings
as subsequent proof attempts (see Footnote~\ref{weights}), 
since the existence of a compatible weight ordering can 
be checked efficiently using current constraint solvers;
our tools use GLPK\footnote{GLPK, the GNU Linear Programming Kit,
  \url{https://www.gnu.org/software/glpk/}.} for that purpose.

\subsection{Restricted Growth Algebras}\label{RG-algebras}

Restricted growth sequences are a well-established combinatorial concept
and have been used in various variants for enumerating and counting
discrete structures; see, e.g., \cite[Sect.~7.2.1.5]{Knuth2011}
and~\cite[Sect.~5, Sect.~17.3.4]{Arndt2010}.

\medskip
Without loss of generality, let $\Sigma = [0, s)$ for some $s \in \NN$. 
For algebra $A$ with domain $D = [0, n)$
let $A : a \mapsto \{0 \mapsto a_0, 1 \mapsto a_1, ..., n-1 \mapsto a_{n-1}\}$, 
where $a \in \Sigma$ and $a_i \in D$. 
Let $\vals(a) = (a_0, \dots, a_{n-1})$ be the list of values of $A(a)$, 
and let $(k_1, \dots, k_{s \cdot n})$ be the concatenation
of $\vals(0), \dots, \vals(s-1)$.
Define Algebra $A$ to be a
\emph{restricted growth algebra}, or \emph{RG-algebra} for short,
if $(k_1, \dots, k_{s \cdot n})$  is a restricted growth sequence, 
i.e., if $k_1 \leq 1$ and
$k_i \leq \max\{ k_0, \dots, k_{i-1}\} + 1$
for $1 < i \leq s \cdot n$.
Note that in contrast to the standard definition, 
we also allow $k_1 = 1$ in a
restricted growth sequence.\footnote{%
  Requiring $k_1 = 0$ would force the first letter to map the first
  domain element to itself. This would exclude the case where the first
  domain element is mapped to a different element.\label{RG-0-1}}

\begin{example}\label{expl-RG}
  For $\Sigma = [0,2)$ and $D = [0, 3)$ define $A$ by
  $0 \mapsto \{0 \mapsto 0, 1 \mapsto 0, 2 \mapsto 0\},\,
  1 \mapsto \{0 \mapsto 2, 1 \mapsto 1, 2 \mapsto 1\}$. 
  Here, the concatenation $(k_1, \dots, k_6) = (0, 0, 0, 2, 1, 1)$
  of the two lists 
  $\vals(0) = (0, 0, 0)$ and $\vals(1) = (2, 1, 1)$,
  is not a restricted growth sequence, 
  since $k_4 = 2 > \max\{ k_1, k_2, k_3 \} + 1 = 1$. 
  The isomorphic RG-algebra
  $0 \mapsto \{0 \mapsto 0, 1 \mapsto 0, 2 \mapsto 0\},\,
  1 \mapsto \{0 \mapsto 1, 1 \mapsto 2, 2 \mapsto 2\}$
  is obtained by the letter renaming
  $0 \mapsto 0, 1 \mapsto 2, 2 \mapsto 1$. 
\end{example}

\begin{proposition}
  For every algebra there is an isomorphic RG-algebra. 
\end{proposition}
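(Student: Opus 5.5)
We construct the required isomorphism explicitly by relabelling the domain elements in order of first appearance. Here "isomorphic algebras" means algebras related by a bijection $\pi : D \to D$ with $B(a, \pi(d)) = \pi(A(a, d))$ for all $a \in \Sigma$ and $d \in D$, as in Example~\ref{expl-RG}. Given $A$ with domain $D = [0,n)$, we first fix the sequence of positions $(a, i)$ for $a = 0, \dots, s-1$ and $i = 0, \dots, n-1$ in lexicographic order; this is the order in which the list $(k_1,\dots,k_{s\cdot n})$ is formed. The renaming $\pi$ and the sequence of values of $B$ are then built greedily along this order, so that each new value of $B$ is at most one more than the largest value seen so far.

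The construction is a scan over the positions. We maintain a partial injection $\pi$ from $D$ to $D$ and a counter $m$ for the next fresh name. At position $(a,i)$, the argument is $\pi^{-1}(i)$, and this preimage must already be defined or must be defined now. Then we look at the value $v = A(a, \pi^{-1}(i))$ and set $k = \pi(v)$, first assigning $\pi(v) := m$ and incrementing $m$ if $\pi(v)$ is still undefined. The difficulty is that the argument side also involves $\pi$. Whenever the current index $i$ has no preimage yet, we must pick some unnamed element of $D$ and declare it to be $\pi^{-1}(i)$.

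The main obstacle is doing this consistently while keeping the growth condition. Processing positions in order, the arguments for letter $0$ are the indices $0,1,\dots$ in sequence. When index $i$ needs a preimage and $i$ is not yet used as a name, we have $m \le i$, since names are handed out consecutively. In that case we choose an unnamed element $u$ and name it $m$. We prefer an element that does not occur as a value at this position, so that the naming only advances $m$ by one. A cleaner way to make this work: we let $i$ range over names, and whenever the current name $i$ equals $m$, i.e.\ it is unassigned, we name a fresh unnamed element $u$ as $i$ and increment $m$. The value computed next is then $\pi(A(a,u))$. This is either an old name, which is at most $\max + 1$, or it receives the fresh name $m$. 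Here the maximum includes $i$ itself only if $i$ appeared as a value earlier. If it did not, $i = m$ was not yet among the $k$'s, and naming the value $m+1$ could violate the condition. This is precisely why footnote~\ref{RG-0-1} allows $k_1 = 1$.

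To repair this, we choose $u$ cleverly. When index $i$ is fresh, we take $u := A(a,u')$ for the value just computed whenever possible, so that names are assigned to values before they are needed as arguments. Equivalently, we perform the scan so that a name first appears as a value, where the growth check happens, before it appears as an argument. If the first letter maps all elements seen so far back into named ones, we pick $u$ arbitrarily among the unnamed elements; the resulting $k$ is then at most $m$, and $m \le \max+1$ still holds by a short induction. We would verify the invariant ``all names $<m$, and $m \le \max\{k_1,\dots,k_j\}+1$'' after each step, using the allowance $k_1 \le 1$ for the base case. Finally, $B(a,d) := \pi(A(a,\pi^{-1}(d)))$ defines an algebra isomorphic to $A$ via $\pi$, and its value list is exactly the sequence $k$ constructed, so $B$ is an RG-algebra.
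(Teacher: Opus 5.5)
\textbf{The gap cannot be closed: the proposition is false.} You use the right definitions: isomorphism is a domain bijection $\pi$ with $B(a,\pi(d)) = \pi(A(a,d))$, and the growth condition is $k_1 \le 1$, $k_i \le \max\{k_1,\ldots,k_{i-1}\}+1$. With these definitions the statement fails for every domain size $n \ge 4$. Take $\Sigma = \{0\}$, $D = [0,4)$ and $A(0) = \{0 \mapsto 1, 1 \mapsto 0, 2 \mapsto 3, 3 \mapsto 2\}$. Any algebra isomorphic to $A$ interprets $0$ as a fixed-point-free involution $g$ of $[0,4)$. The RG condition forces $g(0) = k_1 \le 1$, hence $g(0)=1$ and $g(1)=0$. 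Then $g(2) \notin \{0,1,2\}$, so $k_3 = g(2) = 3 > \max\{k_1,k_2\}+1 = 2$. Concretely, the three conjugates have value lists $(1,0,3,2)$, $(2,3,0,1)$, $(3,2,1,0)$, and none is a restricted growth sequence. More generally, the RG permutations are exactly the value lists $(1,2,\ldots,j,0,j+1,\ldots,n-1)$, i.e.\ one cycle on $[0,j]$ plus fixed points. So a first letter interpreted by any permutation with two cycles of length $\ge 2$ admits no isomorphic RG-algebra. For $n \le 3$ the statement does hold.

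\textbf{Where your argument breaks.} The failing step is the ``short induction'' for $m \le \max\{k_1,\ldots,k_j\}+1$ after picking $u$ arbitrarily. The case where the current argument name $i$ equals $m$ arises only when the already named elements form a set closed under $A(0)$. You then spend the fresh name $m$ on an argument that has not occurred as a value, so afterwards $m$ can be $\max+2$. If $A(0,u)$ is unnamed, it receives exactly that name.

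In the example: $u_0$ is named $0$, its image is named $1$, and $k_1 = 1$, $k_2 = 0$. Now the named pair is closed, so the next argument must be named $2$, and its image gets the name $3 = \max+2$.

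The allowance $k_1 \le 1$ pays for exactly one such argument-first name (the very first), not for a second closed block. Your repair $u := A(a,u')$ is unavailable precisely in this situation, since every value of a named element is already named.

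\textbf{What your construction does prove.} Greedy first-occurrence naming establishes a weaker normal form in which argument names also count as seen. Only $\vals(0)$ is constrained, by
$k_{i+1} \le \max(\{k_1,\ldots,k_i\} \cup \{i\})+1$ for $0 \le i < n$.
The proposition would have to be weakened in this way or restricted to $n \le 3$. The same applies to the paper's remark that model search may be restricted to RG-algebras without loss of generality.
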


An algebra $A$ is a model of a rewrite system $R$
if and only if every algebra isomorphic to $A$ is a model of $R$.
Hence, without loss of generality, we may restrict our search for
models of $R$ to RG-algebras.
Note that RG-algebras are not necessarily unique in their respective
isomorphism class, as the next example shows.

\begin{example}\label{RG-z001}
  The two RG-algebras
  $a \mapsto \{0 \mapsto 1, 1 \mapsto 0\},\, b \mapsto \{0 \mapsto 0, 1 \mapsto 0\}$
  and
  $a \mapsto \{0 \mapsto 1, 1 \mapsto 0\},\, b \mapsto \{0 \mapsto 1, 1 \mapsto 1\}$
  are isomorphic via $0 \mapsto 1,\, 1 \mapsto 0$. 
  The RG-algebra
  $a \mapsto \{0 \mapsto 1, 1 \mapsto 0\},\, b \mapsto \{0 \mapsto 1, 1 \mapsto 0\}$,
  however, belongs to a distinct isomorphism class.
  All three algebras are models of $\{ a a b b \to b b b a a a \}$
  (\texttt{SRS\_Standard/Zantema\_04/z001}). 
\end{example}

\subsection{Strongly Connected Algebras}\label{SC-algebras}

A different way to restrict the search space is by
excluding algebras that are not \emph{strongly connected},
cf. Remark~\ref{useless}. 
For $d, d' \in D$, we say that $d'$ is \emph{reachable} from
$d \in D$ if $A(x, d) = d'$ for some $x \in \Sigma^*$.
This corresponds to reachability in the automaton from Remark~\ref{automaton}.
An algebra $A$ is said to be \emph{strongly connected},
or an \emph{SC-algebra} for short, if every element of $D$
is reachable from every other element of $D$.
In particular, every algebra isomorphic to a strongly connected
algebra is strongly connected.
We use the term \emph{strongly connected component (SCC)} in the
standard graph-theoretic sense. 
An SCC is called a \emph{sink} if it has no outgoing edges
in the corresponding condensation DAG. 

\begin{example}\label{expl-SC}
  The algebra $A$ from Example~\ref{expl-RG} is strongly connected.
  In contrast, the algebra
  $B : 0 \mapsto \{0 \mapsto 0, 1 \mapsto 2, 2 \mapsto 0\},\,
  1 \mapsto \{0 \mapsto 0, 1 \mapsto 0, 2 \mapsto 0\}$
  is not strongly connected, since element $1$ is reachable
  from neither $0$ nor $2$.
  Algebra $B$ has three singleton SCCs, and $\{0\}$ is the only sink. 
\end{example}

We generalize the labelling of rules to subsets of the domain
$\bar{D} \subseteq D$ by 
$\lambda_A(R, \bar{D}) = \{ \lambda_A(\ell, d) \to \lambda_A(r, d) \mid
  (\ell \to r) \in R,\, d \in \bar{D}\}$, 
  thus $\lambda_A(R) = \lambda_A(R, D)$.

\begin{proposition}
  For an algebra $A$ with domain $D$, 
  a sink SCC $\bar{D} \subseteq D$ in $A$, 
  and a rewrite system $R$,
  the following are equivalent:
  \begin{enumerate}
  \item\label{one}
    $R$ is terminating. 
  \item\label{two}
    $\lambda_A(R)$ is terminating.     
  \item\label{three}
    $\lambda_A(R, \bar{D})$ is terminating.     
  \end{enumerate}
\end{proposition}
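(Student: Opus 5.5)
The plan is to take the equivalence of \ref{one} and \ref{two} directly from Theorem~\ref{main-thm}, and to close the cycle with the implications \ref{two}$\Rightarrow$\ref{three} and \ref{three}$\Rightarrow$\ref{one}.

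The implication \ref{two}$\Rightarrow$\ref{three} is immediate. Since $\bar{D} \subseteq D$, we have $\lambda_A(R, \bar{D}) \subseteq \lambda_A(R)$. Every rewrite step of the smaller system is a step of the larger one, so termination is inherited. In the relative setting the strict and nonstrict parts are each subsets of the corresponding parts, and relative termination is likewise inherited by subsystems.

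The substantive step is \ref{three}$\Rightarrow$\ref{one}, which I would prove by contraposition.

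\begin{enumerate}
\item Assume $R$ is not terminating, witnessed by an infinite derivation $x_0 \to_R x_1 \to_R \cdots$. In the relative case this means infinitely many strict steps, interleaved with nonstrict ones.
\item Fix any $d \in \bar{D}$. By Lemma~\ref{label}, the labelled strings $\lambda_A(x_i, d)$ form an infinite derivation, in which each step from $R$ is simulated by the corresponding labelled rule.
\item Inspect the labels actually used. A step $x = p\ell q \to p r q = y$ is simulated in $\lambda_A(\cdot, d)$ by the rule $\lambda_A(\ell, e) \to \lambda_A(r, e)$ with $e = A(q, d)$. This uses $\lambda_A(x\cdot y, d) = \lambda_A(x, A(y,d))\cdot\lambda_A(y,d)$ together with $A(\ell)=A(r)$, so the prefix labels agree.
\item Since $\bar{D}$ is a sink SCC and $d \in \bar{D}$, every element reachable from $d$ lies in $\bar{D}$. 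In particular $e = A(q,d) \in \bar{D}$.
\item Hence every simulated step uses a rule of $\lambda_A(R, \bar{D})$, of the same strict or nonstrict kind. We obtain an infinite derivation in $\lambda_A(R, \bar{D})$ with infinitely many strict steps, so \ref{three} fails.
\end{enumerate}

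The main obstacle is that Lemma~\ref{label} as stated only says that a step of $\lambda_A(R)$ occurs. I therefore need to reopen its proof to extract which label $e$ is used, namely $A(q,d)$ for the right context $q$. This is routine from the concatenation identity above. The only genuine use of the sink hypothesis is the closure property of $\bar{D}$ under reachability, i.e.\ under $A(q,\cdot)$ for all $q \in \Sigma^*$, and I would state it explicitly as a one-line observation before the main argument.
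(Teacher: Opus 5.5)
Your proof is correct and follows essentially the same route as the paper. The paper gets (1)$\Leftrightarrow$(2) from Theorem~\ref{main-thm} and (2)$\Rightarrow$(3) by inclusion. For (3)$\Rightarrow$(1), it labels an infinite $R$-derivation from a start label in $\bar{D}$ and uses that a sink SCC is closed under reachability. Your explicit tracking of the applied rule's label $e = A(q,d)$ spells out more carefully what the paper's observation $\lambda_A(x,\bar{d}) \in (\Sigma \times \bar{D})^*$ leaves implicit, namely why each simulated step uses a rule of $\lambda_A(R,\bar{D})$.
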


\begin{proof}
  Equivalence of~(\ref{one}) and~(\ref{two}) is Theorem~\ref{main-thm}, 
  and (\ref{two}) implies (\ref{three})
  since $\lambda_A(R, \bar{D}) \subseteq \lambda_A(R)$.
  We show that (\ref{three}) implies (\ref{one}) by contradiction.
  Assume $x_0 \to_R x_1 \to_R \cdots$ is an infinite derivation. 
  The key observation is that 
  for $x \in \Sigma^*$ and any $\bar{d} \in \bar{D}$, 
  $\lambda_A(x, \bar{d}) \in (\Sigma \times \bar{D})^*$
  by the definition of reachability. 
  As 
  $d$ in Lemma~\ref{label} may be chosen arbitrarily,
  we apply the lemma to some 
  $\bar{d} \in \bar{D}$ and
  obtain the infinite derivation 
  $\lambda_A(x, \bar{d}) \to_{\lambda_A(R,\bar{D})}
  \lambda_A(x_1, \bar{d}) \to_{\lambda_A(R,\bar{D})} \cdots$. 
\end{proof}

Again, the search for models can be restricted to SC-algebras, 
and even to algebras that are both RG- and SC-algebras
(RG-SC-algebras for short), 
assuming that smaller domains are considered before larger ones
in the enumeration. 

\begin{example}[Example~\ref{RG-z001} cont'd]
  The three models in Example~\ref{RG-z001} are the only SC-models
  as well as the only SC-RG-models of
  \texttt{SRS\_Standard/Zantema\_04/z001}.
  This system has exactly 10 models with domain size 2. 
\end{example}

\begin{example}\label{SC-z101}
  For $R = \{ a a \to b b b ,\, b b b b b \to a a a \}$
  (\texttt{SRS\_Standard/Zantema\_04/z101}), 
  there is provably no SC-model of domain size $> 1$. 
  Let $A$ be a model of $R$. We have
  $a^{10} \to^*_R b^{15} \to^*_R a^{9}$ and
  $b^{10} \to^*_R a^{6} \to^*_R b^{9}$,
  so $A(a^{9}) = A(a^{10})$ and $A(b^{9}) = A(a^{6}) = A(b^{10})$
  by Lemma~\ref{equiv}. 
  Fix some $d_0 \in D$ and define
  $d = A(a^{9}, d_0)$, $d' = A(a^{10}, d_0)$;
  note that $A(a, d) = d'$. 
  Then $d = A(a^{9}, d_0) = A(a^{10}, d_0) = d'$, 
  thus $A(a, d) = d$ and $A(a^n, d) = d$ for $n \geq 0$.
  From $A(b^{9}, d) = A(b^{10}, d) = A(a^{6}, d) = d$
  we get $A(b, d) = d$.
  This shows that $\{d\}$ is a singleton SCC in $A$.
  Note that $R$ has a termination proof
  by some weight ordering (by design).
\end{example}

\begin{example}
  \texttt{SRS\_Standard/Mixed\_SRS/08}
  (renamed from \texttt{SRS/Endrullis/08}, 
  cf.~\cite{Geser2014}), is
  $E = \{ a a b b \to b b b a ,\, b a \to a a a a \}$. 
  This system has numerous models
  (1 for $|D| = 1$, 6 for $|D| = 2$,
  73 for $|D| = 3$, 1490 for $|D| = 4$, etc.),
  but only one SC-model of dom size 2. 
  Similar to Example~\ref{SC-z101}, we show that
  $E$ has no SC-model of domain size > 2.
  Let $A$ be a model of $E$. We have
  $a^2b^2a \to_E b^3aa \to^*_E a^{11}$ and
  $a^2b^2a \to_E a^2ba^4 \to_E a^{9}$, 
  so $A(a^{9}) = A(a^{11})$ by Lemma~\ref{equiv}. 
  Fix some $d_0 \in D$ and define
  $d = A(a^{9}, d_0)$, $d' = A(a^{11}, d_0)$; 
  note that $A(a^2, d) = d'$. 
  Then $d = A(a^{9}, d_0) = A(a^{11}, d_0) = d'$,
  thus $A(a^2, d) = d$.
  Observe that $a^4 \leftarrow_E ba$,
  thus $A(a^4) = A(ba)$ by Lemma~\ref{equiv}. 
  Case~1: $A(a, d) = d$.
  Then $d = A(a^4, d) = A(ba, d) = A(b, A(a, d)) = A(b, d)$. 
  Case~2: $A(a, d) = \bar{d}$ for some $\bar{d} \neq d$.
  Note that this implies
  $d = A(a^2, d) = A(a, A(a, d)) = A(a, \bar{d})$.
  Then $d = A(a^4, d) = A(ba, d) = A(b, \bar{d})$
  and $\bar{d} = A(a^4, \bar{d}) = A(ba, \bar{d}) = A(b, d)$. 
  Both cases yield an SCC: 
  the singleton $\{d\}$ in Case~1,
  and $\{d, \bar{d}\}$ of size $2$ in Case~2. 
\end{example}

\subsection{Counting Restricted Algebras}

Table~\ref{algebra-counts} compares
the number of $\Sigma$-algebras with domain size $n$
against the cardinality of the restricted classes discussed above.
The number of all algebras is $n^{n |\Sigma|}$,
the number of all RG-algebras can explicitely be given
using Stirling numbers of the second kind,
and for the number of all SC-algebras see 
Table~1 in~\cite{DBLP:journals/iandc/Radke65}.

\begin{table}[h]
  \centering
    \caption{Number of algebras with domain size $n$ and alphabet $\Sigma$}
  \label{algebra-counts}
  \begin{tabular}{l *{5}{r}}
    \toprule
    $n$ & $|\Sigma|$
    & all & RG & SC & RG-SC \\[0cm]
    \midrule
    1 & $\ast$ & 1 & 1 & 1 & 1 \\ 
    \hline
    2 & 1 & 4 & 4 & 1 & 1 \\
    2 & 2 & 16 & 16 & 9 & 9 \\
    2 & 3 & 64 & 64 & 49 & 49 \\
    2 & 4 & 256 & 256 & 225 & 225 \\
    2 & 5 & 1024 & 1024 & 961 & 961 \\
    2 & 6 & 4096 & 4096 & 3969 & 3969 \\
    2 & 7 & 16384 & 16384 & 16129 & 16129 \\
    \hline
    3 & 1 & 27 & 14 & 2 & 1 \\
    3 & 2 & 729 & 365 & 296 & 144 \\
    3 & 3 & 19683 & 9842 & 13754 & 6759 \\
    3 & 4 & 531441 & 265721 & 458000 & 227118 \\
    3 & 5 & 14348907 & 7174454 & 13474802 & 6712575 \\
    \hline    
    4 & 1 & 256 & 51 & 6 & 1 \\
    4 & 2 & 65536 & 11051 & 20958 & 3286 \\
    4 & 3 & 16777216 & 2798251 & 11127270 & 1792394 \\
    \hline
    5 & 1 & 3125 & 202 & 24 & 1 \\
    5 & 2 & 9765625 & 422005 & 2554344 & 96825  \\
    \hline
    6 & 1 & 46656 & 876 & 120 & 1 \\
    6 & 2 & 2176782336 & 19628064 & 474099840 & 3483414 \\
    \bottomrule 
  \end{tabular}
\end{table}

\section{Context-Closure}

Example~\ref{a-eps} illustrates that the class of models
for systems with projection rules
(i.e., rules with empty right-hand sides)
is strongly restricted. 
Similar restrictions frequently arise in practice
for a variety of systems.
In such cases, an appropriate \emph{contex-closure}
may turn the system into one that admits a model, 
enabling subsequent proof steps to succeed. 

We use the following notations for concatenation on languages,
rewrite rules and systems.
For $x \in \Sigma$ and $X, Y \subseteq \Sigma^*$ let
$xY = \{xy \mid y \in Y\}$ and $XY = \bigcup_{x \in X}xY$. 
For a rewrite rule $\ell \to r$ 
let $x(\ell \to r) = (x\ell \to xr)$ be its \emph{context-closure}
with \emph{left-context} $x$, 
further let
$X(\ell \to r) = \bigcup_{x \in X}(x\ell \to xr)$,
and the system $XR = \bigcup_{p \in R}Xp$
is the \emph{context-closure} of $R$ with \emph{left-contexts} $X$; 
closure under right-contexts is defined analogously.
For $n, m \in \NN$, let $\cc(n, R, m)$ denote
$\Sigma^n R \Sigma^m$, abbreviated as $\cc(n, m)$
when $R$ is clear from the context. 

\begin{definition}
  For rewrite rules $p'$ and $p$ define $p'$ to be a
  \emph{context-closure} of $p$, if $p' = xpy$ for some $x, y \in \Sigma^*$.   
  For rewrite systems $R'$ and $R$ define
  $R'$ to be a \emph{context-closure} of $R$, if
  each rule in $R'$ is the context-closure of some rule in $R$. 
\end{definition}

For an arbitrary context-closure $R'$ of $R$, termination of
$R$ implies termination of $R'$,
since ${\to_{R'}} \subseteq {\to_R}$, 
but the converse does not hold in general. 

\begin{example}\label{a-ba}
  Context-closure can turn a nontermining system
  into a terminating one, 
  as for $\{a \to b a\}$ and its context-closure
  $a \{a \to b a\} = \{a a \to a b a\}$,
  cf.\ Example~\ref{aa-aba}.
\end{example}

Nontermination is preserved, however, if we additionally
require the context-closure to be \emph{complete}.
For $x, y \in \Sigma^*$ we write $y \geq x$ if $x$ is a suffix of $y$,
i.e., if $y \in \Sigma^*x$.
A set $X \subseteq \Sigma^*$ is a \emph{complete set of suffixes},
if $\Sigma^* = Z \cup \Sigma^* X$
for $Z = \{z \mid \exists x \in X : x \geq z\}$,
i.e, each string is the suffix of some string in $X$,
or has a suffix in $X$.

\begin{example}
  The set $X = \{a, aab, bab, bb\} $ over alphabet $\{a, b\}$ is a complete
  set of suffixes, and it is a minimal such set under set inclusion.%
  \footnote{
  If we additionally require complete sets of suffixes to be minimal
  under set inclusion, after string reversal
  we arrive at the notion of a maximal prefix code,
  cf.~\cite{BerstelPerrinReutenauer2010}.
  }
  Note that $X$ is not a complete set of prefixes (defined
  symmetrically) since $baa$ is neither in $X \Sigma^*$
  nor a prefix of some string in $X$.
\end{example}

In what follows, we restrict ourselves to left-contexts.

\begin{definition}\label{complete-cc}
  A rewrite system $R'$ is a \emph{complete} 
  context-closure of a rewrite system $R$,
  if $R' = \bigcup_{p \in R}X_pp$
  for some complete sets of suffixes $X_p$.
\end{definition}

\begin{proposition}
  Any complete finite context-closure of $R$
  is terminating if and only if $R$ is terminating. 
\end{proposition}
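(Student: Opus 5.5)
The direction ``$R$ terminating implies $R'$ terminating'' is immediate: every rule of $R'$ has the form $x\ell \to xr$ with $(\ell\to r)\in R$, so ${\to_{R'}} \subseteq {\to_R}$, as already observed above. For the converse I would argue contrapositively. Assume $R$ admits an infinite derivation and construct an infinite $R'$-derivation. The key device is to pad with a fixed long prefix, so that every redex has enough left context to be covered by one of the complete suffix sets $X_p$.

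Concretely, let $N = \max\{|x| : x \in X_p,\ p \in R\}$; this exists because the context-closure is finite, so each relevant $X_p$ is finite. Fix any letter $c \in \Sigma$. I am assuming $\Sigma$ is nonempty; if $\Sigma=\emptyset$ every system is trivially terminating. Let $w = c^N$. Given an infinite derivation $x_0 \to_R x_1 \to_R \cdots$, I claim that $w x_0 \to_{R'} w x_1 \to_{R'} \cdots$ is an infinite $R'$-derivation.

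To prove one step, suppose $x_i = u\ell v \to_R u r v = x_{i+1}$ using $p = (\ell\to r)$. The string $wu$ has length at least $N$. By completeness, $wu \in Z \cup \Sigma^* X_p$, where $Z$ is the set of suffixes of elements of $X_p$.
\begin{itemize}
\item If $wu \in \Sigma^* X_p$, write $wu = s x$ with $x\in X_p$. Then $wu\ell v = s(x\ell)v \to_{R'} s(xr)v = wurv$, since $x\ell\to xr \in X_p p \subseteq R'$.
\item If $wu \in Z$, then $wu$ is a suffix of some $x\in X_p$, so $|wu| \le N$. This forces $u=\eps$ and $|x| = N$, that is, $x = w$. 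In that case $w \in X_p$ and the step is directly an instance of $w\ell\to wr$.
\end{itemize}
In either case the padded step is an $R'$ step, so the padded sequence is an infinite $R'$-derivation, contradicting termination of $R'$.

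The main subtlety is the boundary case just treated. Choosing the padding length $N$ rather than $N+1$ makes the case $wu \in Z$ possible, but it is harmless, as shown. To avoid the case altogether I would pick $w = c^{N+1}$. Then $wu$ is longer than every element of $X_p$, so it cannot lie in $Z$ and must have a suffix in $X_p$. Finiteness is used exactly to bound $N$; for infinite $X_p$ one would instead need to argue per step with a prefix depending on the step, which fails for a uniform padding. That is why the hypothesis ``finite'' appears in the statement.
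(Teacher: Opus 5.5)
Your proof is correct and is essentially the paper's own argument: you pad every string of an infinite $R$-derivation with one fixed prefix of length $m=\max\{|x| \mid x\in X_p,\ p\in R\}$, so that each redex's left context has a suffix in the relevant $X_p$. Your explicit treatment of the boundary case $wu\in Z$ just spells out what the paper's one-line claim $zx\to_{R'}zy$ leaves implicit. One aside is inaccurate, though harmless: over $\Sigma=\emptyset$ the system $\{\epsilon\to\epsilon\}$ is not terminating, but you never need the letter $c$ there, since then every $X_p=\{\epsilon\}$, so $N=0$ and $w=\epsilon$ works.
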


\begin{proof}
  Let $R'$ be a finite complete context-closure of $R$
  as in Definition~\ref{complete-cc}, 
  and let $m = \max\{ |x| \mid \exists p \in R, x \in X_p \}$.
  Then for $z \in \Sigma^m$, 
  $x \to_R y$ implies $z x \to_{R'} z y$, 
  so any infinite derivation $x_0 \to_R x_1 \to_R \dots$
  carries over to the infinite derivation
  $z x_0 \to_R z x_1 \to_R \dots$.   
\end{proof}

\begin{remark}\label{expand}
  If $X \cup \{x\}$ is a complete set of suffixes, 
  then $X \cup \{\Sigma x\}$ is a again complete set of suffixes.
  Note that any finite complete set of suffixes can be
  obtained by this expansion process, starting with $\{\eps\}$. 
  More generally, if both $X \cup \{x\}$ and $Y$ are complete sets of suffixes, 
  then $X \cup \{Y x\}$ is a complete set of suffixes.
\end{remark}

\begin{remark}\label{preserve-model}
  An arbitrary context-closure preserves the model property:
  If $A \vDashc (\ell \to r)$, then
  $A \vDashc x(\ell \to r)y$,
  since $A(\ell) = A(r)$ implies
  $A(x \ell y) = A(x) \circ A(\ell) \circ A(y) = 
  A(x) \circ A(r) \circ A(y) = A(x r y)$. 
\end{remark}

Conversely, a context-closure may establish the
model property, turning a non-model into a model: 
Even if $A(\ell) \neq A(r)$,
we can have $A(x \ell) = A(x r)$ in case $A(x)$ is not injective.

\begin{example}\label{a-eps-contd}
Let $R = \{a \to \eps , b \to \eps \}$, cf.~Example~\ref{a-eps}. 
For arbitrary domain $D$, this system has just
the one model $a, b \mapsto \id_D$, which is not an SC-model
for domain size $> 1$.
In constrast, the 4-rules system $\{a, b\}R$ 
has exactly 5 models of domain size 2, 
of which 2 are RG-SC-models,
see Sections~\ref{RG-algebras} and~\ref{SC-algebras}. 
\end{example}

\begin{remark}\label{finite}
  If the algebra domain $D$ is finite, then $D^D$ is finite as well. 
  Therefore, although $\Sigma^* x$ is infinite, we can
  effectively compute the set of functions $A(\Sigma^* x)$
  for any $x \in \Sigma^*$
  by expanding contexts as in Remark~\ref{expand},
  and detect termination of this expansion process: 
  If $A(\Sigma^{i+1} x) = A(\Sigma^i x)$,
  then $A(\Sigma^n x) = A(\Sigma^i x)$ for $n \geq i$.
\end{remark}

\subsection{Full Context-Closure}\label{full-cc}

This approach proceeds by successively searching
for models of increasingly large context-closures of $R$,
i.e., of $\Sigma^i R$ for $i \geq 0$.
Cleary, each $\Sigma^i R$ is a complete context-closure of $R$.
    
\begin{example}\label{W23-10-2-97}
  \texttt{SRS\_Relative/Waldmann-23/size-10-alpha-2-num-97}: 
  This 3-rules system
  $\{ a a \to \eps , b b b \to \eps , a \torel a b b a \}$
  was solved only by Matchbox in TC'25 (34 s) and in TC'24 (50 s),
  and was unsolved before.
  
  We apply full context-closure $\cc(k,0)$ for increasing $k$, 
  and use weight orderings as the only subsequent proof attempt. 
  No termination proof is found for $k \leq 3$,
  but for $k \geq 4$, we succeed.
  Each row in the Tables~\ref{table-cc} and~\ref{table-cc-projections}
  corresponds to a search through
  9765625 (all), 422005 (RG), 2554344 (SC), or 96825 (RG-SC) algebras
  respectively, cf.\ Table~\ref{algebra-counts}.
  Each entry gives the number of models tried until 
  either the search failed or a termination proof was found, 
  along with the corresponding runtime.
  As stated in Remark~\ref{preserve-model},
  if $\cc(k,0)$ succeeds, then also $\cc(k+1,0)$ succeeds, 
  so the columns for $\cc(5,0)$ and $\cc(6,0)$ are only included
  to display the corresponding runtime.
  In Table~\ref{table-cc}, full context-closure is applied to all rules,
  and in Table~\ref{table-cc-projections} only to projection rules. 

  \medskip
  These results suggest the following interpretation.
  Regarding unsuccessful proof attempts, the runtime is generally
  poor---as expected, since exhausting all possibilities is expensive. 
  For successful attempts, SC-algebras perform well,
  with RG-SC-algebras performing even better.
  Furthermore, restricting the context-closure to a subset of the
  system---here, the projection rules---can be beneficial,
  reducing both search space and runtime.
  In Table~\ref{table-cc-projections},
  already the first RG-SC-model encountered in the enumeration succeeds. 
  Finally, larger contexts need not result in prohibitive runtime,
  suggesting that the approach scales reasonably well in practice.
\end{example}

\begin{table}[h]
  \caption{Models for Example~\ref{W23-10-2-97} with $n = 5$
    and context-closure 
    for all rules}
  \label{table-cc}
  \centering
  \begin{tabular}{r|rr|rrr}
    \toprule
    \multicolumn{1}{r|}{}
    & \multicolumn{2}{c|}{unsuccessful}
    & \multicolumn{3}{c}{successful} \\
    \cmidrule(lr){2-3} \cmidrule(lr){4-6}
    models & $\cc(2,0)$ & $\cc(3,0)$ & $\cc(4,0)$ & $\cc(5,0)$ & $\cc(6,0)$ \\
    \midrule
    all & 16076 (77 s) & 71036 (298 s) & 195 (2.0 s) & 195 (3.7 s) & 195 (8.9 s) \\    
    RG & 1219 (4.8 s) & 4325 (16 s) & 52 (0.8 s) & 52 (1.4 s) & 52 (2.8 s) \\
    SC & 0 (26 s) & 960 (32 s) & 1 (0.2 s) & 1 (0.3 s) & 1 (0.3 s) \\
    {\bf RG-SC} & {\bf 0 (1.2 s)} & {\bf 40 (1.6 s)} & {\bf 1 (0.2 s)} & {\bf 1 (0.2 s)} & {\bf 1 (0.3 s)} \\
    \bottomrule 
  \end{tabular}
\end{table}

\begin{table}[h]
  \caption{Models for Example~\ref{W23-10-2-97} with $n = 5$ and 
    context-closure 
    for projection rules only}
  \label{table-cc-projections}
  \centering
  \begin{tabular}{r|rr|rrr}
    \toprule
    \multicolumn{1}{r|}{}
    & \multicolumn{2}{c|}{unsuccessful}
    & \multicolumn{3}{c}{successful} \\
    \cmidrule(lr){2-3} \cmidrule(lr){4-6}
    models & $\cc(2,0)$ & $\cc(3,0)$ & $\cc(4,0)$ & $\cc(5,0)$ & $\cc(6,0)$ \\
    \midrule
    all & 3456 (44 s) & 5436 (61 s) & 195 (1.5 s) & 195 (2.5 s) & 195 (5.5 s) \\    
    RG & 258 (2.3 s) & 363 (2.8 s) & 52 (0.6 s) & 52 (1.0 s) & 52 (1.9 s) \\    
    SC & 0 (24 s) & 0 (24 s) & 1 (0.2 s) & 1 (0.3 s) & 1 (0.3 s) \\
    {\bf RG-SC} & {\bf 0 (1.2 s)} & {\bf 0 (1.2 s)} & {\bf 1 (0.2 s)} & {\bf 1 (0.2 s)} & {\bf 1 (0.3 s)} \\    
    \bottomrule 
  \end{tabular}
\end{table}

\subsection{Semantic Context-Closure}\label{semantic-cc}

Here, we fix an algebra $A$
and test whether it is a model
of increasingly large context-closures of $R$.
Completeness of these context-closures is ensured by
expanding context-closures for single rules by
single letters in all possible ways.
This is done for those rules that violate the model property.

More formally, 
define the relation $\vdashc_A$ on rewrite systems by
$R \cup \{p\} \vdashc_A R \cup \Sigma p$,
where $p$ is a 
rule with $A \nvDashc p$.
We start with $R_0 = R$ and 
compute a sequence $R_0 \vdashc_A R_1 \vdashc_A R_2 \cdots$.
This process need not terminate, 
but if it does, the result is uniquely determined, 
as the relation $\vdashc_A$ is confluent. 
Further, the corresponding set of suffixes
is complete, because $\{\eps\}$ is complete
(note that $R = \{\eps\}R$) 
and completeness is preserved by $\vdashc_A$
(cf.~Remark~\ref{expand}).

\begin{remark}
  Termination of $\vdashc_A$ is decidable,
  exploiting the finiteness of $D^D$ (cf.~Remark~\ref{finite}). 
  For each rule $p$ define a finitely branching tree
  of rewrite rules with root $p$.
  A rule $p'$ is a leaf in case $A \vDashc p'$,
  otherwise its children are the rules in $\Sigma p'$. 
  To each node $\ell \to r$ associate the pair
  $(A(\ell), A(r)) \in D^D \times D^D$,
  thus $\ell \to r$ is a leaf iff the pair
  satisfies $A(\ell) = A(r)$.
  By König's lemma, $\vdashc_A$ is nonterminating
  iff some tree has an infinite path. 
  Suppose a path contains nodes
  $x(\ell \to r)$ and $yx(\ell \to r)$, 
  for $x \in \Sigma^*$ and $y \in \Sigma^+$,
  with identical pairs associated,
  and $A(x\ell) \neq A(xr)$.
  Since $A$ is a homomorphism,
  $A(y^k x \ell) = A(x\ell) \neq A(xr) = A(y^k x r)$
  for $k \ge 0$.
  Therefore, no node $y^k x(\ell \to r)$ is a leaf,
  and the tree has an infinite path. 
  Since $D^D \times D^D$ is finite, any path of length greater than
  $|D^D \times D^D|$ contains two nodes with identical pairs. 
  Expanding each tree to depth $|D^D \times D^D| + 1$
  therefore decides termination.
\end{remark}

\begin{remark}
  If the process terminates, the result is minimal
  with respect to $\leq_\cc$, 
  where 
  $R \leq_\cc R'$ if $R'$ is a context-closure of $R$.
  Adding further contexts to such a minimal set is useless,
  because the model property is preserved under context-closures,
  see Remark~\ref{preserve-model}.
  By definition, if the process terminates, the result is a
  model of $A$.
  Thus, whenever a 
  context-closure of $R$ exists that is a model of $A$,
  then such a system can be effectively computed by this approach. 
\end{remark}

Finally, we present a class of algebras for which 
semantic context closure is guaranteed to terminate. 
For $n > 0$ and nonempty $\Gamma \subseteq \Sigma$ let
$I_{n,\Gamma}$ be the $\Sigma$-algebra with domain $[0,n)$, 
$I_{n,\Gamma}(a,d) = \min\{d+1, n-1\}$ for $a \in \Gamma$
(\emph{saturating increment}), and
$I_{n,\Gamma}(a,d) = 0$
otherwise (\emph{reset}).
Note that $I_{n,\Gamma}$ is a strongly connected restricted growth algebra.

\begin{example}
  The algebra $I_{4,\{a\}}$was used in Example~\ref{saturation-counting}. 
\end{example}

\begin{example}[Example~\ref{W23-10-2-97} cont'd]\label{W23-10-2-97-contd}
  Consider again
  $R = \{ a a \to \eps , b b b \to \eps , a \torel a b b a \}$
  over $\Sigma = \{a,b\}$,
  and choose $A = I_{5,\Gamma}$ for $\Gamma = \{b\}$. 
  Algebra $A$ already is a model of the third rule, since both its
  left- and right-hand side start with letter $a$. 
  The expansion process thus adds contexts for the two
  remaining (projection) rules.
  For instance, 
  $R_0 = R \setminus \{a \torel a b b a \} \vdashc_A
  R_1 = R_0 \setminus \{aa \to \eps\} \cup \{aaa \to a,\, baa \to b\}
  \vdashc_A
  R_1 \setminus \{baa \to b\} \cup \{abaa \to ab,\, bbaa \to bb\}
  \vdashc_A \cdots \vdashc_A S R_0$
  for the resulting complete set of suffixes 
  $S = \{ a b^n \mid 0 \leq n \leq 3 \} \cup \{ b^4 \}$. 
  The labelled system has 50 strict and 5 nonstrict rules
  over an alphabet of size 10.
  A weight ordering finally establishes termination. 
  Figure~\ref{expansion-tree} illustates the expansion of suffixes. 
  In this example, for both projection rules 
  the resulting sets of suffixes happen to coincide.

\begin{figure}
\centering
\begin{forest}
  for tree={draw, rounded corners, grow=west,
    anchor=east, parent anchor=west, child anchor=east}
  [$\eps$
  [$a$]
    [$b$
      [$a$]
      [$b$
        [$a$]
        [$b$
          [$a$]
          [$b$
            [$a$]
            [$b$]
          ]
        ]
      ]
    ]
  ]
\end{forest}
\caption{Suffix expansion for Example~\ref{W23-10-2-97-contd}}
\label{expansion-tree}
\end{figure}

\end{example}

For an arbitrary rule $\ell \to r$,
semantic context closure with $A = I_{n,\Gamma}$ always terminates: 
We have $A \vDashc x(\ell \to r)$ for every $x \in \Sigma^{n-1}$ ($\ast$),
therefore semantic context-closure stops with some subset of
suffixes of $\Sigma^{n-1}$. 
In order to prove ($\ast$), let $\bar{\Gamma} = \Sigma \setminus \Gamma$
and consider two cases:
If $x = yz$ for some $z \in \bar{\Gamma}\Sigma^*$, then
$A \vDashc z(\ell \to r)$, since $A(z\ell, d) = 0 = A(zr, d)$ for $d \in D$,
thus $A \vDashc yz(\ell \to r)$ by Remark~\ref{preserve-model}.
Otherwise we have $x \in \Gamma^*$,
and $A \vDashc x(\ell \to r)$,
since $A(x \ell, d) = n - 1 = A(x r, d)$ for $d \in D$. 
Note that this proof also shows that the resulting suffix set is
always a subset of $\bar{\Gamma} \Gamma^* \cup \Gamma^*$.

\section{Future Work}

All constructions presented in this paper extend to
quasi-models~\cite{DBLP:journals/fuin/Zantema95}.
Continuing the approach from Section~\ref{semantic-cc}, 
a natural direction is to investigate
semantic context-closure for other families of algebras,
or to show that the family of algebras considered here
is in some sense canonical. 
Finally, it may be worth exploring the relation between
finite models and 
interpretations with almost linear weight functions
over an infinite domain~\cite{Hofbauer-18b}.

\bibliography{main.bib}

\begin{thebibliography}{10}

\bibitem{Arndt2010}
Jörg Arndt.
\newblock {\em Matters Computational: Ideas, Algorithms, Source Code}.
\newblock 2010.
\newblock URL: \url{http://www.jjj.de/fxt/#fxtbook}.

\bibitem{DBLP:conf/tacas/AvanziniMS16}
Martin Avanzini, Georg Moser, and Michael Schaper.
\newblock {TcT}: Tyrolean complexity tool.
\newblock In Marsha Chechik and Jean{-}Fran{\c{c}}ois Raskin, editors, {\em
  Tools and Algorithms for the Construction and Analysis of Systems, 22nd
  International Conference, {TACAS} 2016, Held as Part of the European Joint
  Conferences on Theory and Practice of Software, {ETAPS} 2016, Eindhoven, The
  Netherlands, April 2-8, 2016, Proceedings}, Lecture Notes in Computer
  Science, pages 407--423. Springer, 2016.
\newblock \href {https://doi.org/10.1007/978-3-662-49674-9_24}
  {\path{doi:10.1007/978-3-662-49674-9_24}}.

\bibitem{BauEndrullisWaldmann2013}
Alexander Bau, Jörg Endrullis, and Johannes Waldmann.
\newblock {SAT} compilation for termination proofs via semantic labelling.
\newblock In Johannes Waldmann, editor, {\em 13th International Workshop on
  Termination, WST 2013, Bertinoro, Italy, August 29-31, 2013. Proceedings},
  pages 8--12, 2013.
\newblock URL: \url{https://termination-portal.org/wiki/WST}.

\bibitem{BauThiemannWaldmann2014}
Alexander Bau, René Thiemann, and Johannes Waldmann.
\newblock Automated {SAT} encoding for termination proofs with semantic
  labelling and unlabelling.
\newblock In Carsten Fuhs, editor, {\em 14th International Workshop on
  Termination, WST 2014, Vienna, Austria, July 17-18, 2014. Proceedings}, pages
  6--10, 2014.
\newblock URL: \url{https://termination-portal.org/wiki/WST}.

\bibitem{BerstelPerrinReutenauer2010}
Jean Berstel, Dominique Perrin, and Christophe Reutenauer.
\newblock {\em Codes and Automata}, volume 129 of {\em Encyclopedia of
  mathematics and its applications}.
\newblock Cambridge University Press, 2010.

\bibitem{BookOtto93}
Ronald~V. Book and Friedrich Otto.
\newblock {\em String-Rewriting Systems}.
\newblock Texts and Monographs in Computer Science. Springer, 1993.
\newblock \href {https://doi.org/10.1007/978-1-4613-9771-7}
  {\path{doi:10.1007/978-1-4613-9771-7}}.

\bibitem{Geser2014}
Alfons Geser.
\newblock A solution to endrullis-08 and similar problems.
\newblock In Carsten Fuhs, editor, {\em 14th International Workshop on
  Termination, WST 2014, Vienna, Austria, July 17-18, 2014. Proceedings}, pages
  31--35, 2014.
\newblock URL: \url{https://termination-portal.org/wiki/WST}.

\bibitem{DBLP:journals/jar/GieslABEFFHOPSS17}
J{\"{u}}rgen Giesl, Cornelius Aschermann, Marc Brockschmidt, Fabian Emmes,
  Florian Frohn, Carsten Fuhs, Jera Hensel, Carsten Otto, Martin Pl{\"{u}}cker,
  Peter Schneider{-}Kamp, Thomas Str{\"{o}}der, Stephanie Swiderski, and
  Ren{\'{e}} Thiemann.
\newblock Analyzing program termination and complexity automatically with
  {AProVE}.
\newblock {\em J. Autom. Reason.}, 58(1):3--31, 2017.
\newblock \href {https://doi.org/10.1007/S10817-016-9388-Y}
  {\path{doi:10.1007/S10817-016-9388-Y}}.

\bibitem{DBLP:conf/rta/HirokawaM06}
Nao Hirokawa and Aart Middeldorp.
\newblock Predictive labeling.
\newblock In Frank Pfenning, editor, {\em Term Rewriting and Applications, 17th
  International Conference, {RTA} 2006, Seattle, WA, USA, August 12-14, 2006,
  Proceedings}, Lecture Notes in Computer Science, pages 313--327. Springer,
  2006.
\newblock \href {https://doi.org/10.1007/11805618_24}
  {\path{doi:10.1007/11805618_24}}.

\bibitem{Hofbauer2016}
Dieter Hofbauer.
\newblock {System description: MultumNonMulta}.
\newblock In Aart Middeldorp and Ren{\'{e}} Thiemann, editors, {\em 15th
  International Workshop on Termination, {WST} 2016, Obergurgl, Austria,
  September 5-7, 2016. Proceedings}, page~90, 2016.
\newblock URL: \url{https://termination-portal.org/wiki/WST}.

\bibitem{Hofbauer-18b}
Dieter Hofbauer.
\newblock Embracing infinity -- termination of string rewriting by almost
  linear weight functions.
\newblock In Salvador Lucas, editor, {\em 16th International Workshop on
  Termination, WST 2018, Oxford, U.~K., July 18-19, 2018. Proceedings}, pages
  65--69, 2018.
\newblock URL: \url{https://termination-portal.org/wiki/WST}.

\bibitem{Knuth2011}
Donald~Ervin Knuth.
\newblock {\em The Art of Computer Programming, Volume {4A}: Combinatorial
  Algorithms, Part 1}.
\newblock Addison-Wesley, Upper Saddle River, New Jersey, 2011.

\bibitem{DBLP:conf/rta/Koprowski06a}
Adam Koprowski.
\newblock {TPA:} termination proved automatically.
\newblock In Frank Pfenning, editor, {\em Term Rewriting and Applications, 17th
  International Conference, {RTA} 2006, Seattle, WA, USA, August 12-14, 2006,
  Proceedings}, Lecture Notes in Computer Science, pages 257--266. Springer,
  2006.
\newblock \href {https://doi.org/10.1007/11805618_19}
  {\path{doi:10.1007/11805618_19}}.

\bibitem{DBLP:conf/cade/KoprowskiM07}
Adam Koprowski and Aart Middeldorp.
\newblock Predictive labeling with dependency pairs using {SAT}.
\newblock In Frank Pfenning, editor, {\em Automated Deduction, 21st
  International Conference, CADE-21, Bremen, Germany, July 17-20, 2007,
  Proceedings}, Lecture Notes in Computer Science, pages 410--425. Springer,
  2007.
\newblock \href {https://doi.org/10.1007/978-3-540-73595-3_31}
  {\path{doi:10.1007/978-3-540-73595-3_31}}.

\bibitem{DBLP:conf/cade/MiddeldorpOZ96}
Aart Middeldorp, Hitoshi Ohsaki, and Hans Zantema.
\newblock Transforming termination by self-labelling.
\newblock In Michael~A. McRobbie and John~K. Slaney, editors, {\em Automated
  Deduction, 13th International Conference, CADE-13, New Brunswick, NJ, USA,
  July 30 - August 3, 1996, Proceedings}, Lecture Notes in Computer Science,
  pages 373--387. Springer, 1996.
\newblock \href {https://doi.org/10.1007/3-540-61511-3_101}
  {\path{doi:10.1007/3-540-61511-3_101}}.

\bibitem{DBLP:conf/csl/OhsakiMG00}
Hitoshi Ohsaki, Aart Middeldorp, and J{\"{u}}rgen Giesl.
\newblock Equational termination by semantic labelling.
\newblock In Peter Clote and Helmut Schwichtenberg, editors, {\em Computer
  Science Logic, 14th Annual Conference of the EACSL, Fischbachau, Germany,
  August 21-26, 2000, Proceedings}, Lecture Notes in Computer Science, pages
  457--471. Springer, 2000.
\newblock \href {https://doi.org/10.1007/3-540-44622-2_31}
  {\path{doi:10.1007/3-540-44622-2_31}}.

\bibitem{DBLP:journals/iandc/Radke65}
Charles~E. Radke.
\newblock Enumeration of strongly connected sequential machines.
\newblock {\em Inf. Control.}, 8(4):377--389, 1965.
\newblock \href {https://doi.org/10.1016/S0019-9958(65)90316-5}
  {\path{doi:10.1016/S0019-9958(65)90316-5}}.

\bibitem{DBLP:conf/rta/SternagelM08}
Christian Sternagel and Aart Middeldorp.
\newblock Root-labeling.
\newblock In Andrei Voronkov, editor, {\em Rewriting Techniques and
  Applications, 19th International Conference, {RTA} 2008, Hagenberg, Austria,
  July 15-17, 2008, Proceedings}, Lecture Notes in Computer Science, pages
  336--350. Springer, 2008.
\newblock \href {https://doi.org/10.1007/978-3-540-70590-1_23}
  {\path{doi:10.1007/978-3-540-70590-1_23}}.

\bibitem{DBLP:conf/rta/SternagelT11}
Christian Sternagel and Ren{\'{e}} Thiemann.
\newblock Modular and certified semantic labeling and unlabeling.
\newblock In Manfred Schmidt{-}Schau{\ss}, editor, {\em Rewriting Techniques
  and Applications, 22nd International Conference, {RTA} 2011, Novi Sad,
  Serbia, May 30 - June 1, 2011}, Proceedings, LIPIcs, pages 329--344. Schloss
  Dagstuhl - Leibniz-Zentrum f{\"{u}}r Informatik, 2011.
\newblock \href {https://doi.org/10.4230/LIPICS.RTA.2011.329}
  {\path{doi:10.4230/LIPICS.RTA.2011.329}}.

\bibitem{Terese03}
Terese.
\newblock {\em Term rewriting systems}, volume~55 of {\em Cambridge Tracts in
  Theoretical Computer Science}.
\newblock Cambridge University Press, 2003.

\bibitem{DBLP:journals/entcs/ThiemannM08}
Ren{\'{e}} Thiemann and Aart Middeldorp.
\newblock Innermost termination of rewrite systems by labeling.
\newblock In J{\"{u}}rgen Giesl, editor, {\em 7th International Workshop on
  Reduction Strategies in Rewriting and Programming, WRS@RDP 2007, Paris,
  France, June 25, 2007}, Proceedings, Electronic Notes in Theoretical Computer
  Science, pages 3--19. Elsevier, 2007.
\newblock \href {https://doi.org/10.1016/J.ENTCS.2008.03.050}
  {\path{doi:10.1016/J.ENTCS.2008.03.050}}.

\bibitem{DBLP:conf/rta/Waldmann04}
Johannes Waldmann.
\newblock Matchbox: {A} tool for match-bounded string rewriting.
\newblock In Vincent van Oostrom, editor, {\em Rewriting Techniques and
  Applications, 15th International Conference, {RTA} 2004, Aachen, Germany,
  June 3-5, 2004, Proceedings}, volume 3091 of {\em Lecture Notes in Computer
  Science}, pages 85--94. Springer, 2004.
\newblock \href {https://doi.org/10.1007/978-3-540-25979-4_6}
  {\path{doi:10.1007/978-3-540-25979-4_6}}.

\bibitem{DBLP:series/eatcs/Wechler92}
Wolfgang Wechler.
\newblock {\em Universal Algebra for Computer Scientists}, volume~25 of {\em
  {EATCS} Monographs on Theoretical Computer Science}.
\newblock Springer, 1992.
\newblock \href {https://doi.org/10.1007/978-3-642-76771-5}
  {\path{doi:10.1007/978-3-642-76771-5}}.

\bibitem{DBLP:journals/fuin/Zantema95}
Hans Zantema.
\newblock Termination of term rewriting by semantic labelling.
\newblock {\em Fundam. Informaticae}, 24(1/2):89--105, 1995.
\newblock \href {https://doi.org/10.3233/FI-1995-24124}
  {\path{doi:10.3233/FI-1995-24124}}.

\bibitem{DBLP:journals/jar/Zantema05}
Hans Zantema.
\newblock Termination of string rewriting proved automatically.
\newblock {\em J. Autom. Reason.}, 34(2):105--139, 2005.
\newblock \href {https://doi.org/10.1007/S10817-005-6545-0}
  {\path{doi:10.1007/S10817-005-6545-0}}.

\end{thebibliography}

\end{document}